\newcommand{\vast}{\bBigg@{4}}
\newcommand{\Vast}{\bBigg@{5}}
\newtheorem{theorem}{Theorem}
\newtheorem{proposition}{Proposition}
\newtheorem{remark}{Remark}
\title{Exact solutions for a Solow-Swan model with non-constant returns to scale}
\author{Nicol\`o Cangiotti{$^{1}$} \& Mattia Sensi{$^{2}$}\\[1em]
\small $^1$\emph{University of Pavia, Department of Mathematics, via Ferrata 5, \\ \small 27100 Pavia (PV), Italy. Email:} \texttt{nicolo.cangiotti@unipv.it}\\
 \small $^2$\emph{University of Trento, Department of Mathematics, via Sommarive 14,\\ \small 38123 Trento (TN), Italy. Email:} \texttt{mattia.sensi@unitn.it}}
\date{}
\begin{document}

\begin{abstract}
The Solow-Swan model is shortly reviewed from a mathematical point of view. By considering non-constant returns to scale, we obtain a general solution strategy. We then compute the exact solution for the Cobb-Douglas production function, for both the classical model and the von Bertalanffy model. Numerical simulations are provided.
\vspace{5mm}
\small

\noindent
\emph{Keywords}: Solow-Swan model, Cobb-Douglas production function, returns to scale, von Bertalanffy model.
\smallskip

\noindent
\emph{JEL Classification Codes}: C60, C65, C67.
\end{abstract}

\maketitle
\section{Introduction}

The Solow-Swan model plays an important role in neoclassical economics. Even though more than 60 years have passed since it was developed, independently, by Robert Solow \cite{Solow56} and  Trevor Swan \cite{Swan56} in 1956, the model is still being analyzed and generalized, as evidenced by a large literature, which involves many fields of studies  \cite{Rubio2000,dohtani2010growth,Farmer2001,gandolfo1997, guerrini2006solow,Halsmayer2014,Kulikov2019,Lund2018}. 

This work is devoted to the deepening of the mathematical point of view of the model. In particular, we are interested in investigating a model with weaker conditions on the returns to scale than the usual ones (see, e.g., \cite{guerrini2006solow}). In fact, we are going to relax the hypothesis of constant returns to scale, which in the classical model allows to rewrite the production function as a function of the output per effective unit of labour; instead, we let the production function to have increasing or decreasing returns to scale. We obtain a non-autonomous first order differential equation, for which we provide the exact solution.

The paper is organized as follows. In Section 2, we present the Solow-Swan model, with a focus on the Cobb-Douglas production function; moreover, we study the non-constant returns to scale case, obtaining the exact solution for the model. In Section 3, we explore a different model, namely the von Bertalanffy model, by using the same techniques. Section 4 is devoted to numerical analysis, to better understand the behaviour of such solutions.  Finally, in Section 5 we shall suggest some perspectives for future research.
\smallskip

\section{The classical model}
\label{S2}
As highlighted in the introduction, Solow-Swan models have a key role in neoclassical growth theory. Let us denote by $C^2(\mathbb{R}^2)$ the class of twice continuously differentiable functions $F:\mathbb{R}^2\to\mathbb{R}$. We actually can restrict the study to $\mathbb{R}^2_+$, i.e. the first quadrant, the only economically relevant subset in this setting.
The mathematics of the model is based on the hypothesis that a production function $F(x_1,x_2)$
satisfies the following conditions:

\begin{equation}
    \label{inada}
   \frac{\partial F}{\partial x_i}> 0, \quad 
   \frac{\partial ^2F}{\partial x_i\partial x_j}< 0,  \quad 
   \lim_{x_i \to 0^+}  \frac{\partial F}{\partial x_i}= +\infty, \quad 
   \lim_{x_i \to \infty} \frac{\partial F}{\partial x_i}= 0,
\end{equation}
 for $i,j=1,2$.
\smallskip
In literature, conditions \eqref{inada}, which are aimed at ensuring the existence of an unique stable steady state in a neoclassical growth model, are called \emph{Inada condtions}. For further details and properties about the Inada conditions, we refer to \cite{barelli2003,inada63,litina2008,Taka1985,uzawa63}.

Classically, the variables $x_1$ and $x_2$ are denoted with $K$ and $L$, respectively. We switch to this notation for the remainder of the article.

Moreover, the quite stringent assumption that $F$ has constant returns to scale is rather frequent in many studies. In fact, thanks to such a hypothesis, it is not hard to obtain an exact solution for the ODE describing the model, at least in its most famous \emph{autonomous} form.\\
Since it is a useful step towards our more general construction, we briefly recall this strategy. Let us suppose that the rate of change of $K$ is proportional to $F$, and the labor force grows exponentially; such a setting can be described by the following system:
\begin{align*}
    \frac{\textnormal{d}K}{\textnormal{d}t}&=sF(K,L),\\
    \frac{\textnormal{d}L}{\textnormal{d}t}&=\gamma L,
\end{align*}
with $s,\lambda >0$ constants. Thus, since the equation for $L$ is autonomous and easily solved, we focus our attention on the ODE describing the evolution in time of $K$, which is, explicitly:
\begin{equation}
\label{FirstEq}
\frac{\textnormal{d}K}{\textnormal{d}t}=sF(K,L).
\end{equation}
We notice that the constant return to scale hypothesis implies that 
\begin{equation}
\label{ConstRTS}
    F(\lambda K, \lambda L)=\lambda F(K,L).
\end{equation}
Dividing both sides of \eqref{FirstEq} by $L$, the equation becomes
\begin{equation}
\label{SecondEq}
    \frac{1}{L}\frac{\textnormal{d}K}{\textnormal{d}t}=sF \left ( \frac{K}{L},1 \right ).
\end{equation}
Let us now consider the following derivative:
\begin{equation}
    \label{derivativeKL}
    \frac{\textnormal{d}}{\textnormal{d}t}\left (\frac{K}{L} \right)=\frac{1}{L}\frac{\textnormal{d}K}{\textnormal{d}t}-K\frac{\textnormal{d}L}{\textnormal{d}t}\frac{1}{L^2}=\frac{1}{L}\frac{\textnormal{d}K}{\textnormal{d}t}-\gamma\frac{K}{L}.
\end{equation}
Combining \eqref{SecondEq} and \eqref{derivativeKL}, and introducing the variable $k:=\frac{K}{L}$, i.e. the capital-labor ratio, and the notation $f(k):=F(k,1)$, we are finally ready to write the classic Solow-Swan model:
\begin{equation}
    \label{SolowSwan1}
    \frac{\textnormal{d}k}{\textnormal{d}t}=sf(k)-\gamma k.
\end{equation}
\smallskip

However, in this paper we shall present a different approach to the Solow-Swan model compared to the one given in \cite{guerrini2006solow}, which is
\begin{equation}
\dot{k}=sf(k)-(\delta+\gamma(t))k,
\label{eqn:guerrinisw}
\end{equation}
where $k$ is the capital-labor ratio, $s$ is the fraction of output which is saved, $\delta$ is the depreciation rate, $f$ is a production function and $\gamma(t)$ is the ratio $\dot{L}/L$; $\dot{k}$ indicates the derivative of $k$ with respect to the time variable $t$, i.e. $\frac{\textnormal{d}}{\textnormal{d}t}k(t)$.
In fact, in \cite{guerrini2006solow}, the author assumed $f$ to have constant return to scale (as in the original model), and $\gamma$ to be variable in time. Conversely, we assume $\gamma$ to be constant, from which we obtain 
\begin{equation}
\frac{\dot{L}}{L}=\gamma \implies L(t)=L_0 e^{\gamma t}.
\label{eqn:labor}
\end{equation}
However, we do \textit{not} assume our production function $f$ to have constant return to scale. Instead, we choose a generic homogeneous production function, namely



\smallskip

\begin{equation}
F(\lambda K, \lambda L)=\lambda^{n} F(K,L).
\label{eqn:cobbyboy}
\end{equation}
This means that, if $n=1$, the function has constant return to scale, if $n<1$ ($n>1$) the function has decreasing (increasing) returns to scale. In particular, we notice that
\begin{equation}
F(K/L,1)=F(L^{-1}K,L^{-1}L)=L^{-n}F(K,L).
\label{eqn:trick}
\end{equation}
Starting from the usual equations
\begin{subequations}
\begin{align}
\dot{K}=&sF(K,L),\label{eqn:start1}\\
\dot{L}=&\gamma L
\label{eqn:start2},
\end{align}
\label{eqn:start}%
\end{subequations}
we can derive a non-autonomous equation for the capital-labor ($K/L$) ratio $k$, as stated in the following proposition.
\begin{proposition}
\label{prop:ratio1}
The ratio $k$ evolves in time obeying the ODE
\begin{equation}
    \dot{k}=sL^{n-1}(t)f(k)-\gamma k,
    \label{eqn:ODE}
\end{equation}
where $f(k):=F(k,1)$ and $L(t)=L_0 e^{\gamma t}$; recall \eqref{eqn:labor} and \eqref{eqn:start2}.
\end{proposition}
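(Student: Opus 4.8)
The plan is to reprise the computation that produced the classical model \eqref{SolowSwan1}, but to retain the power of $L$ that the constant-returns hypothesis would otherwise absorb. First I would apply the quotient rule to $k = K/L$; this is exactly the calculation already recorded in \eqref{derivativeKL}, and since that step used only $\dot{L} = \gamma L$ from \eqref{eqn:start2} — and no property of $F$ whatsoever — it carries over verbatim to the present setting, giving
\begin{equation*}
\frac{\textnormal{d}}{\textnormal{d}t}\left(\frac{K}{L}\right) = \frac{1}{L}\frac{\textnormal{d}K}{\textnormal{d}t} - \gamma\frac{K}{L}.
\end{equation*}

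Next I would substitute $\dot{K} = sF(K,L)$ from \eqref{eqn:start1}, so that $\dot{k} = \frac{s}{L}F(K,L) - \gamma k$. The only substantive move is to re-express $F(K,L)/L$ using homogeneity of degree $n$ in place of degree $1$: relation \eqref{eqn:trick} rearranges to $F(K,L) = L^{n}F(K/L,1) = L^{n}f(k)$, whence $\frac{1}{L}F(K,L) = L^{n-1}f(k)$. Substituting this back, and recalling $L = L(t) = L_0 e^{\gamma t}$ from \eqref{eqn:labor}, yields \eqref{eqn:ODE} directly.

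I do not expect a genuine obstacle here — the derivation is short and purely algebraic; the one point deserving care is the bookkeeping of the homogeneity exponent in passing from $F(K,L)$ to $L^{n}f(k)$. As a sanity check, in the constant-returns case $n=1$ the factor $L^{n-1}$ collapses to unity and one recovers the autonomous equation \eqref{SolowSwan1}, whereas for $n \neq 1$ the explicit time dependence entering through $L^{n-1}(t) = \bigl(L_0 e^{\gamma t}\bigr)^{n-1}$ persists, which is precisely the non-autonomous feature the proposition is meant to isolate.
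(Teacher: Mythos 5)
Your proposal is correct and follows essentially the same route as the paper's own proof: the quotient-rule computation for $\dot{k}$ combined with \eqref{eqn:start1} and the homogeneity relation \eqref{eqn:trick} to replace $\frac{1}{L}F(K,L)$ by $L^{n-1}f(k)$. The added sanity check that $n=1$ recovers the autonomous equation \eqref{SolowSwan1} is a nice touch but does not change the argument.
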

\begin{proof}
By direct computation, we notice that
$$
    \frac{\textnormal{d}}{\textnormal{d}t}\frac{K}{L}=\frac{1}{L}\frac{\textnormal{d}K}{\textnormal{d}t}-\frac{K}{L^2}\frac{\textnormal{d}L}{\textnormal{d}t}=\frac{1}{L}\frac{\textnormal{d}K}{\textnormal{d}t}-\gamma \frac{K}{L}.
$$
Now, combining (\ref{eqn:start1}) and (\ref{eqn:trick}), we notice that
$$
\frac{1}{L}\frac{\textnormal{d}K}{\textnormal{d}t}=sL^{n-1}F(K/L,1).
$$
Recalling the definitions of $k=K/L$ and $f(k):=F(k,1)$, we conclude the proof.
\end{proof}

\begin{remark}
There are many standard properties of the following Cauchy problem:
\begin{equation}
\label{eqn:cauchy1}
    \begin{cases}
       \dot{k}=sL^{n-1}(t)f (k)-\gamma k,\\
     k(0)=k_0,
    \end{cases}
\end{equation}
that one can easily obtain by simple observations or by using to use the so-called \emph{Comparison theorems} (for results in that direction see \cite[Sec. 3]{guerrini2006solow} and \cite{ODEBirkRota}).
\end{remark}
\smallskip

\begin{remark}
The results obtained so far are valid for a wide class of production functions; however, in order to proceed with the analysis of the Cauchy problem \eqref{eqn:cauchy1} one needs to specify a production function.
\end{remark}

Our investigation now proceeds with a very natural choice for the production function $f(k)$, i.e. the \emph{Cobb-Douglas production function} \cite{CobbDouglas28}. For the standard Cobb-Douglas production function (in which we fixed, without loss of generality, the total-factor productivity coefficient equal to $1$)
\[
F(K,L)=K^\alpha L^\beta, \quad 0<\alpha \leq 1, \quad 0<\beta \le 1, \quad \alpha+\beta=n,
\]
it is easy to compute the law of the capital-labor ratio $k(t)$:
\begin{equation}
\label{eqn:ODECD}
\dot{k}=sL_0^{n-1} e^{(n-1)\gamma t}k^\alpha-\gamma k.
\end{equation}
The following theorem provides the exact solution for \eqref{eqn:ODECD}.
\begin{theorem}
\label{Thm:1}
Let $k(t)$ be a solution of \eqref{eqn:ODECD}. Then if $n\neq 1$ and $\alpha\neq 1$
\begin{equation}
    k(t)=\left( e^{(\alpha-1)\gamma t} \bigg[s(1-\alpha)L_0^{n-1}\frac{(e^{\gamma \beta t}-1)}{\gamma \beta} +k_0^{1-\alpha} \bigg]\right)^{\frac{1}{1-\alpha}},
    \label{eqn:thmCD}
\end{equation}
where we denote $k_0:=k(0)$.
\end{theorem}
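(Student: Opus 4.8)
The plan is to recognize \eqref{eqn:ODECD} as a Bernoulli equation and linearize it. Rewriting it as $\dot{k}+\gamma k = sL_0^{n-1}e^{(n-1)\gamma t}k^\alpha$, I would divide through by $k^\alpha$ and introduce the substitution $v:=k^{1-\alpha}$, whose derivative is $\dot v = (1-\alpha)k^{-\alpha}\dot k$. This is exactly where the hypothesis $\alpha\neq 1$ enters, since it guarantees $1-\alpha\neq 0$, so that the substitution is invertible and the final back-substitution $k=v^{1/(1-\alpha)}$ makes sense. After multiplying by $(1-\alpha)$, the equation becomes the linear first-order ODE
\[
\dot v + (1-\alpha)\gamma\, v = (1-\alpha)sL_0^{n-1}e^{(n-1)\gamma t}.
\]

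Next I would solve this linear equation with the integrating factor $\mu(t)=e^{(1-\alpha)\gamma t}$, which turns the left-hand side into $\frac{\textnormal{d}}{\textnormal{d}t}(\mu v)$. The key simplification occurs on the right-hand side: the product of exponentials carries the exponent $[(1-\alpha)+(n-1)]\gamma t=(n-\alpha)\gamma t$, and since $\alpha+\beta=n$ this collapses to $\beta\gamma t$. Integrating $\frac{\textnormal{d}}{\textnormal{d}t}(\mu v)=(1-\alpha)sL_0^{n-1}e^{\beta\gamma t}$ from $0$ to $t$, and using the initial condition $v(0)=k_0^{1-\alpha}$, produces the factor $\frac{e^{\gamma\beta t}-1}{\gamma\beta}$; here the constraint $0<\beta\le 1$, so $\beta\neq 0$, is precisely what makes this primitive well defined.

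Finally I would solve for $v(t)$, noting $\mu(t)^{-1}=e^{(\alpha-1)\gamma t}$, and undo the substitution via $k=v^{1/(1-\alpha)}$ to recover exactly \eqref{eqn:thmCD}. The argument is a routine chain of standard ODE manipulations, so I do not anticipate a genuine obstacle; the only points demanding care are the bookkeeping of the exponents --- in particular spotting that $n-\alpha=\beta$ fuses the right-hand side into a single clean exponential --- and verifying that the hypotheses $\alpha\neq 1$ and $\beta\neq 0$ are exactly what keep the division by $1-\alpha$ and the integration of $e^{\beta\gamma t}$ legitimate.
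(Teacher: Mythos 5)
Your proposal is correct and follows essentially the same route as the paper's proof: recognizing \eqref{eqn:ODECD} as a Bernoulli equation, substituting $v=k^{1-\alpha}$ to obtain the linear ODE $\dot v+(1-\alpha)\gamma v=(1-\alpha)sL_0^{n-1}e^{(n-1)\gamma t}$, and solving it via the standard integrating-factor formula with the identity $n-\alpha=\beta$. You simply spell out the integration step that the paper leaves implicit, and you correctly identify where the hypotheses $\alpha\neq 1$ and $\beta\neq 0$ are used.
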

\begin{proof}
Consider \eqref{eqn:ODECD}, which is clearly a Bernoulli differential equation \cite{ODEHairer}. We divide both sides by $k^\alpha$, and apply the substitution $v=k^{1-\alpha}$. Then, after some algebraic steps, \eqref{eqn:ODECD} becomes
$$
\frac{1}{1-\alpha}\dot{v}+\gamma v = sL_0^{n-1} e^{(n-1)\gamma t},
$$
We multiply both sides by $(1-\alpha)$, which brings the equation to a standard form
$$
\dot{v}+(1-\alpha)\gamma v = sL_0^{n-1}(1-\alpha) e^{(n-1)\gamma t}.
$$
Recalling $\beta=n-\alpha$, we apply the well-known formula to solve this first order linear ODE, obtaining \eqref{eqn:thmCD}.

\end{proof}

\begin{remark}
If $n=\alpha+\beta=1$, i.e. if the Cobb-Douglas function has constant return to scale, we recover Thm. 10 of \cite{guerrini2006solow}.
\end{remark}

\begin{remark}
\label{alpha1}
It is clear that for $\alpha=1$ Eq. \eqref{eqn:ODECD} is a  linear differential equation. The solution of such equation is computable by standard methods and, for $k(0)=k_0$, we have
\[
k(t)=k_0 \exp \left(\frac{s L_0^{n-1} \left(e^{(n-1) t}-1\right)}{n-1}-\gamma  t\right).
\]
\end{remark}

\section{The von Bertalanffy model}
In this section, we propose a different model, in which the labor force follows a von Bertalanffy law \cite{vonB}:
\begin{equation}
\label{vonBerta}
    \begin{cases}
    \dot{L}=r\left (L_{\infty}-L \right),\\
    L(0)=0,
    \end{cases}
    \end{equation}
where 
\begin{equation}
L_{\infty}= \lim_{t\to\infty}L(t),
\end{equation}
is a theoretical maximum asymptote size of the labor force, and $r>0$ determines the speed at which the labor force approaches the asymptote. The model was exhaustively studied by Brida and Limas in \cite{Brida2005}, where the authors present many important results for the constant returns to scale case. As in Sect. \ref{S2}, we are going to relax this hypothesis, considering also increasing (and decreasing) returns to scale, and we present the exact solution for the model.
\begin{remark}
The von Bertalanffy equation was widely studied by many authors from different fields. See, for instance, \cite{Colern1978,JM92,Mingari}.
\end{remark}
The first step is to compute the law of the ratio $k$.
\begin{proposition}
The ratio $k$ evolves in time obeying the ODE
\begin{equation}
    \dot{k}=sL^{n-1}(t)f(k)-r k (L_\infty - L(t)),
    \label{eqn:ODE2}
\end{equation}
where $f(k):=F(k,1)$ and, from \eqref{vonBerta}, $L(t)=L_{\infty}-(L_{\infty}-L_0) e^{-r t}$.
\end{proposition}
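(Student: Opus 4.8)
The plan is to follow exactly the reduction carried out for Proposition \ref{prop:ratio1}, since the only structural change here is that the exponential labour law $\dot{L}=\gamma L$ has been replaced by the von Bertalanffy law $\dot{L}=r(L_\infty-L)$ from \eqref{vonBerta}. First I would set $k:=K/L$ and differentiate with the quotient rule, writing
\[
\dot{k}=\frac{1}{L}\dot{K}-\frac{K}{L^2}\dot{L},
\]
so that the evolution of $k$ splits into a production contribution $\dot{K}/L$ and a labour-dilution contribution $-K\dot{L}/L^2$. Each term can then be treated separately, and crucially the first term is insensitive to which law governs $L$.

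For the production term I would invoke the homogeneity identity \eqref{eqn:trick}, which rearranges to $F(K,L)=L^{n}F(K/L,1)=L^{n}f(k)$; substituting $\dot{K}=sF(K,L)$ from \eqref{eqn:start1} then yields $\dot{K}/L=sL^{n-1}f(k)$, precisely the first term of \eqref{eqn:ODE2} and identical in form to the classical case. For the dilution term I would substitute the von Bertalanffy expression $\dot{L}=r(L_\infty-L)$ directly into $-K\dot{L}/L^{2}$ and simplify using $K/L=k$, which produces the second, reactive term of \eqref{eqn:ODE2}. Finally, solving the linear initial value problem \eqref{vonBerta} gives the closed form $L(t)=L_\infty-(L_\infty-L_0)e^{-rt}$, which I would record so that \eqref{eqn:ODE2} becomes a fully explicit non-autonomous ODE in $k$.

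There is no deep analytic obstacle here: the statement is a direct computation, and the substance lies entirely in the bookkeeping of the powers of $L$ in the dilution term. The one point requiring genuine care — and the main difference from Proposition \ref{prop:ratio1} — is that the relative growth rate $\dot{L}/L$ is no longer the constant $\gamma$ but the time-dependent quantity $r(L_\infty-L(t))/L(t)$. In the exponential case the factor $L$ in $\dot{L}=\gamma L$ cancels one power of $L$ in the denominator, collapsing the dilution contribution to the clean term $-\gamma k$; in the von Bertalanffy case no such cancellation is automatic, so I would track the powers of $L$ explicitly rather than transcribing the classical pattern. Once this term is simplified and the expression for $L(t)$ is inserted, the stated ODE \eqref{eqn:ODE2} follows.
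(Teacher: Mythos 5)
Your overall strategy coincides with the paper's (the paper simply declares the proof ``analogous to Prop.~\ref{prop:ratio1}''), and your treatment of the production term is correct: \eqref{eqn:trick} gives $F(K,L)=L^{n}f(k)$, hence $\dot K/L=sL^{n-1}f(k)$. The problem is the dilution term, which is precisely the point you flag as requiring care and then do not actually carry out. The quotient rule gives
\[
\dot k=\frac{\dot K}{L}-\frac{K}{L}\cdot\frac{\dot L}{L}=sL^{n-1}f(k)-k\,\frac{\dot L}{L},
\]
and under the von Bertalanffy law \eqref{vonBerta} one has $\dot L/L=r\bigl(L_\infty-L(t)\bigr)/L(t)$, so the computation yields
\[
\dot k=sL^{n-1}(t)f(k)-\frac{r\,k\,\bigl(L_\infty-L(t)\bigr)}{L(t)},
\]
which carries an extra factor $1/L(t)$ relative to the second term of \eqref{eqn:ODE2}. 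Your claim that substituting $\dot L=r(L_\infty-L)$ into $-K\dot L/L^{2}$ ``produces the second, reactive term of \eqref{eqn:ODE2}'' is therefore false as stated: $-K\dot L/L^{2}=-rk(L_\infty-L)/L$, not $-rk(L_\infty-L)$, and the two agree only if $L\equiv 1$. In the exponential case the factor of $L$ in $\dot L=\gamma L$ cancels the denominator, which is exactly why Prop.~\ref{prop:ratio1} comes out clean; here no such cancellation occurs, so this route does not land on \eqref{eqn:ODE2}.

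This is not merely a bookkeeping slip on your side: the quotient-rule derivation from \eqref{eqn:start1} and \eqref{vonBerta} simply does not produce \eqref{eqn:ODE2} (nor, consequently, the Cauchy problem \eqref{eqn:cauchy2}) in the form stated. A complete write-up must either exhibit the missing cancellation --- which the dynamics \eqref{vonBerta} as written do not provide --- or conclude with the corrected dilution term $-rk(L_\infty-L)/L$ and note the discrepancy with the statement. Asserting that the stated ODE ``follows'' after simplification, without displaying that simplification, is exactly where your proof breaks.
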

\begin{proof}
The proof is analogous to the proof of Prop. \ref{prop:ratio1}.
\end{proof}
We consider the Cobb-Douglas production function to proceed with our investigation, thus obtaining the following Cauchy problem:
\begin{equation}
\label{eqn:cauchy2}
\begin{cases}
\dot{k}=s(L_{\infty}-(L_{\infty}-L_0) e^{-r t})^{n-1} k^\alpha-r  (L_{\infty}-L_0) e^{-r t}k,
\\
k(0)=k_0.
\end{cases}
\end{equation}
The solution of the Cauchy problem \eqref{eqn:cauchy2} is given by the following theorem.
\begin{theorem}
Let $k(t)$ be a solution of \eqref{eqn:cauchy2}. Then if $n\neq 1$ and $\alpha \neq 1$
\begin{equation}
k(t)=\Bigg ( e^{-(\alpha -1) (L_{\infty}-L_0) e^{-r t}}  \Bigg ( k_{0}^{1-\alpha} e^{(\alpha -1) (L_{\infty}-L_{0})}-(\alpha -1) \int_0^t \mathcal{L}(\tau) \, d\tau \Bigg )\Bigg)^{\frac{1}{1-\alpha}},
    \label{eqn:thmCD2}
\end{equation}
where
\[
\mathcal{L}(\tau):=\frac{s\left ( L_{\infty}-\left( L_{\infty}-L_0\right )e^{-r\tau} \right )^n\cdot \exp \left [ r\tau+(\alpha-1)(L_{\infty}-L_0)e^{-r\tau}\right ]}{L_{\infty}\left ( e^{r\tau}-1\right )+L_0}.
\]
\end{theorem}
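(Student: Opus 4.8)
The plan is to exploit the fact that \eqref{eqn:cauchy2} is, exactly like \eqref{eqn:ODECD}, a Bernoulli differential equation in $k$ with the same nonlinearity $k^\alpha$. Writing it compactly as $\dot{k}=A(t)k^\alpha-B(t)k$, with
$$
A(t)=s\bigl(L_{\infty}-(L_{\infty}-L_0)e^{-rt}\bigr)^{n-1},\qquad B(t)=r(L_{\infty}-L_0)e^{-rt},
$$
I would first divide through by $k^\alpha$ and introduce the substitution $v=k^{1-\alpha}$, just as in the proof of Thm. \ref{Thm:1}. Since $\dot{v}=(1-\alpha)k^{-\alpha}\dot{k}$, the equation collapses to the first-order linear ODE
$$
\dot{v}+(1-\alpha)B(t)\,v=(1-\alpha)A(t),\qquad v(0)=k_0^{1-\alpha}.
$$

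Next I would solve this linear problem by the integrating-factor method. The integrating factor is $\mu(t)=\exp\bigl[(1-\alpha)\int_0^t B(\tau)\,d\tau\bigr]$, whose inner integral is elementary,
$$
(1-\alpha)\int_0^t B(\tau)\,d\tau=(1-\alpha)(L_{\infty}-L_0)\bigl(1-e^{-rt}\bigr),
$$
so that $\mu(t)=\exp\bigl[(1-\alpha)(L_{\infty}-L_0)(1-e^{-rt})\bigr]$ and $\mu(0)=1$. The variation-of-constants formula then gives
$$
v(t)=\mu(t)^{-1}\Bigl[k_0^{1-\alpha}+(1-\alpha)\int_0^t\mu(\tau)A(\tau)\,d\tau\Bigr],
$$
and back-substituting $k=v^{1/(1-\alpha)}$ produces a formula of the shape \eqref{eqn:thmCD2}.

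The bookkeeping, which I expect to be the most delicate step, is to recognize the integrand as the stated kernel $\mathcal{L}(\tau)$ and to arrange the prefactors. Splitting $\mu(t)^{-1}=e^{(\alpha-1)(L_{\infty}-L_0)}\,e^{-(\alpha-1)(L_{\infty}-L_0)e^{-rt}}$ isolates the outer exponential $e^{-(\alpha-1)(L_{\infty}-L_0)e^{-rt}}$ of \eqref{eqn:thmCD2} and reproduces the boundary term $k_0^{1-\alpha}e^{(\alpha-1)(L_{\infty}-L_0)}$; the residual constant $e^{(\alpha-1)(L_{\infty}-L_0)}$ is then absorbed into the integral, where it combines with $\mu(\tau)$ so that the surviving exponential collapses to $\exp[(\alpha-1)(L_{\infty}-L_0)e^{-r\tau}]$. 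To match the denominator appearing in $\mathcal{L}$, I would use the factorization
$$
L_{\infty}-(L_{\infty}-L_0)e^{-r\tau}=e^{-r\tau}\bigl[L_{\infty}(e^{r\tau}-1)+L_0\bigr],
$$
which rewrites $A(\tau)$ in the required form. Unlike the classical case of Thm. \ref{Thm:1}, the remaining quadrature $\int_0^t\mathcal{L}(\tau)\,d\tau$ admits no elementary closed form for general $n$, so the solution is necessarily left in semi-explicit (integral) form; checking that the whole expression reduces to $k_0$ at $t=0$ provides a convenient consistency test.
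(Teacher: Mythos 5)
Your proposal is correct and follows essentially the same route as the paper: the Bernoulli substitution $v=k^{1-\alpha}$, reduction to a first-order linear ODE, and the integrating-factor (variation-of-constants) formula, with your bookkeeping --- the splitting of $\mu(t)^{-1}$ and the factorization $L_{\infty}-(L_{\infty}-L_0)e^{-r\tau}=e^{-r\tau}\bigl[L_{\infty}(e^{r\tau}-1)+L_0\bigr]$ --- correctly reproducing both $\mathcal{L}(\tau)$ and the boundary term $k_0^{1-\alpha}e^{(\alpha-1)(L_{\infty}-L_0)}$. As a minor point in your favour, your linear equation $\dot v+(1-\alpha)B(t)v=(1-\alpha)A(t)$ carries the correct sign on the $v$-term, whereas the intermediate display in the paper's proof shows $+(1-\alpha)r(L_{\infty}-L_0)e^{-rt}v$ on the right-hand side (a sign typo); the final formula \eqref{eqn:thmCD2} is consistent with your version.
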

\begin{proof}
The proof is analogous to the proof of Thm. \ref{Thm:1}. In fact, by the same substitution $v=k^{1-\alpha}$, we get the following linear differential equation:
\[
\dot{v}=(1-\alpha ) s \left (L_\infty-(L_\infty-L_0) e^{-r t}\right)^{n-1}+(1-\alpha ) r (L_\infty-L_0)  e^{-r t} v.
\]
Thus, applying the classical formula, we compute the solution.  
\end{proof}
\begin{remark}
A comment analogous to Rmk. \ref{alpha1} can be made. For $\alpha=1$, we have the following solution (which involves hypergeometric functions $_2F_1$), where we introduce, for ease of notation, $L_*:=L_\infty-L_0$:
\begin{gather*}
k(t)=k_0 \cdot  \exp \vast[\left(e^{-r t}-1\right)L_*+ 
\frac{s L_0^{n-1} \left(-\frac{L_0}{L_*}\right)^{1-n} \, _2F_1\left(1-n,1-n;2-n;\frac{L_{\infty}}{L_*}\right)}{(n-1) r}-\\
\frac{s \left(\frac{L_{\infty} e^{r t}}{-L_*}+1\right)^{1-n} \left(-L_* e^{-r t}+L_{\infty}\right)^{n-1} \, _2F_1\left(1-n,1-n;2-n;\frac{e^{r t} L_{\infty}}{L_*}\right)}{(n-1) r}\vast ].
\end{gather*}
For further details on the use of the hypergeometric function in this context see, for instance, \cite{Mingari}.
\end{remark}

\section{Numerical simulations}


In this section we propose some numerical simulations for both the classical and the von Bertalanffy model, for the specific choice of Cobb-Douglas for our production function $f(k)$. 
The results of the classical case agree with the expectations, consistently with neoclassical growth theory with a convergence toward the initial conditions for the decreasing returns to scale and an exponential growth for increasing returns to scale (Figure \ref{Figure_3}). A very interesting output comes from the study of the von Bertalanffy model, as dysplayed in Figure \ref{Figure_4}. The latter seems to level out the differences between the two cases, namely increasing returns to scale and decreasing returns to scale. The following graphs show the behaviour of the capital-labor ratio.

\begin{figure}[ht!]
    \centering
    \begin{minipage}[b]{0.475\textwidth}
            \centering
    \includegraphics[width=0.95\textwidth]{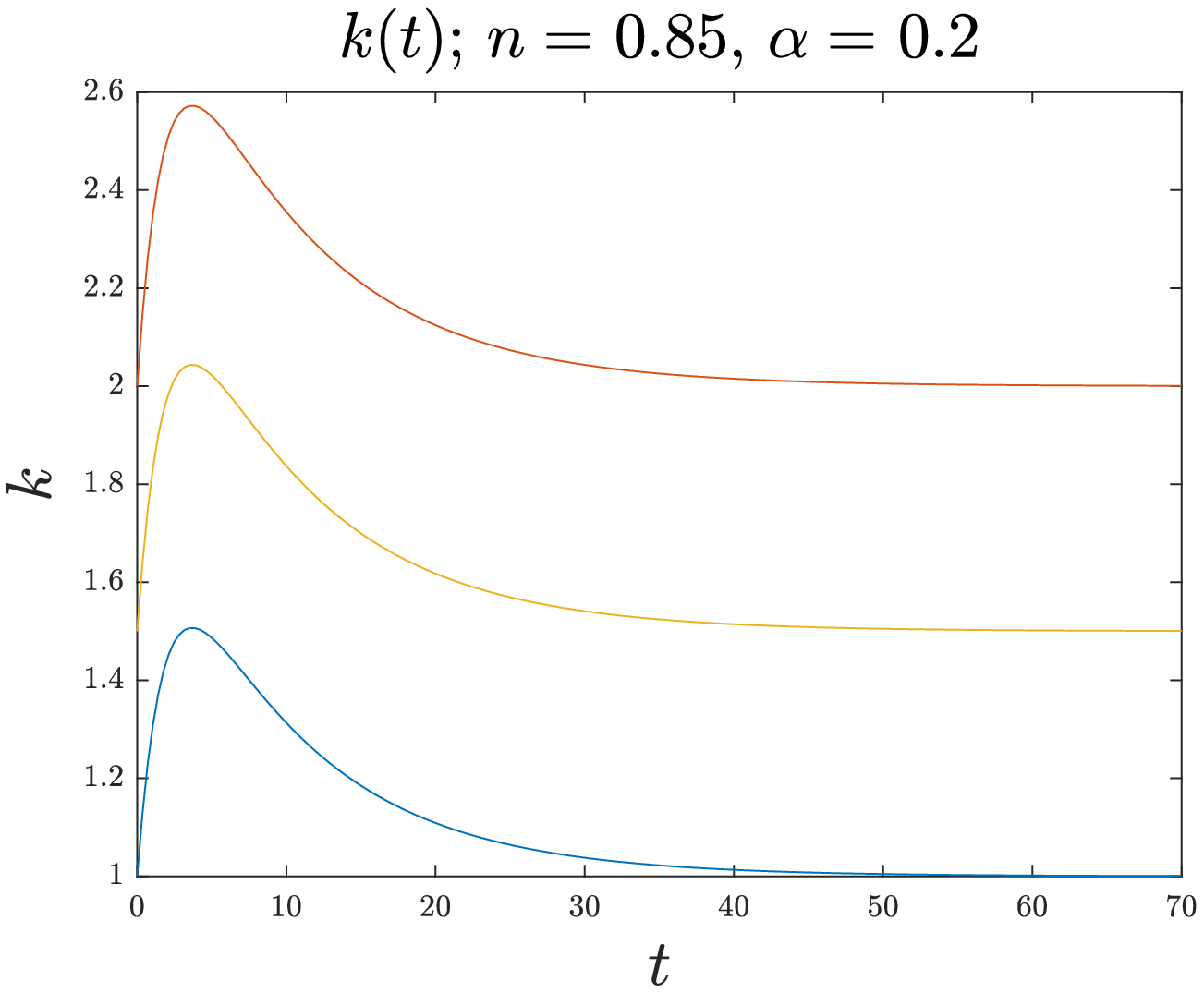}
  \caption*{(a) For $n<1$, we observe a rapid growth, followed by a convergence towards the initial conditions $k_0=1,1.5,2$.}
    \end{minipage}\hfill
    \begin{minipage}[b]{0.475\textwidth}
            \centering
    \includegraphics[width=0.95\textwidth]{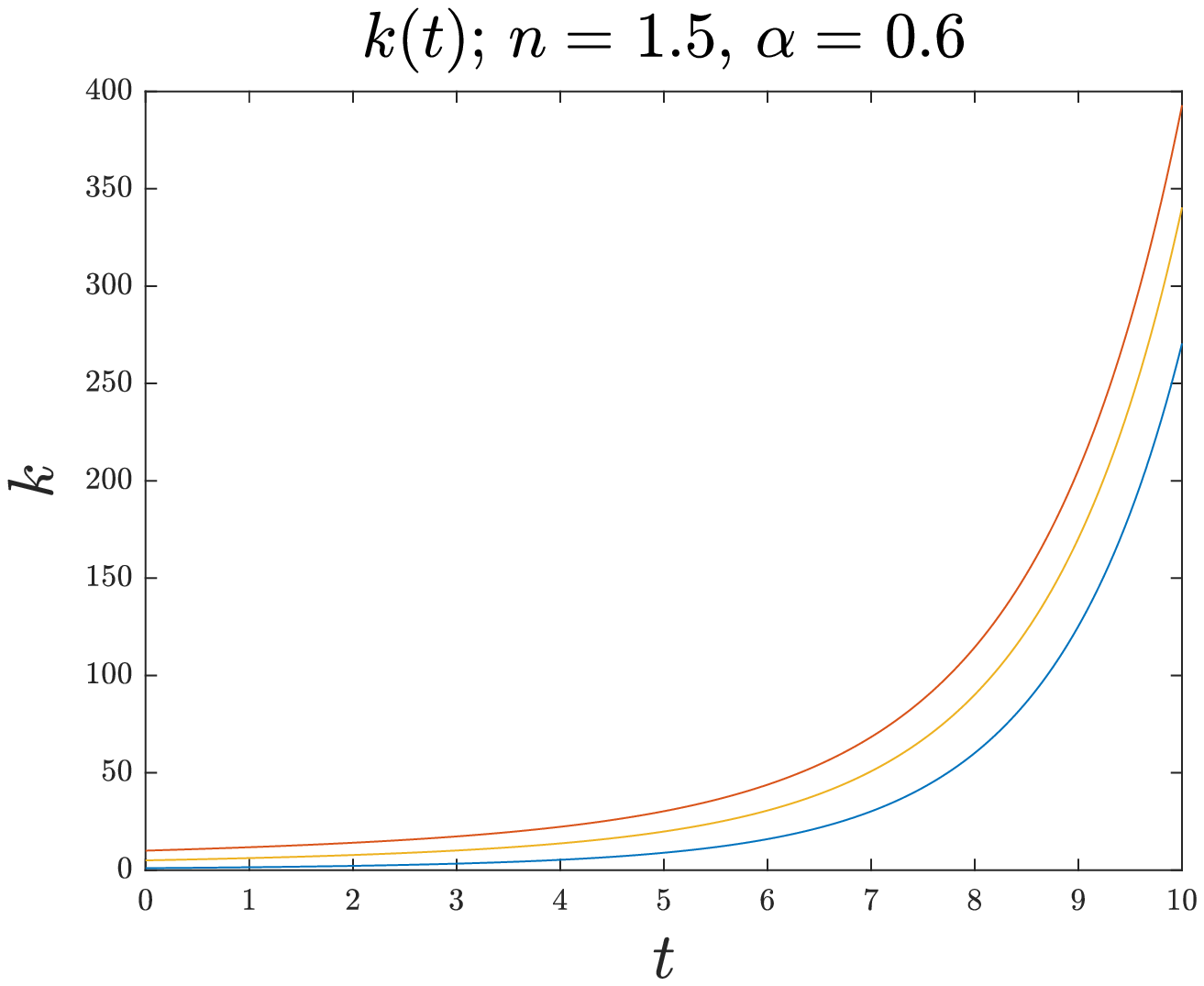}
  \caption*{(b) For $n>1$, we observe an exponential growth, independent on the initial conditions $k_0=1,5,10$.}
    \end{minipage}
    \caption{Numerical simulations of (\ref{eqn:thmCD}) for (a) $n<1$ (b) $n>1$. The value of $\alpha$ is displayed in the titles of each figure. The other values of the parameters are $\beta=n-\alpha$, $\gamma=0.7$, $s=0.4$, $L_0=1$.}
    \label{Figure_3}
\end{figure}

\begin{figure}[ht!]
    \centering
    \begin{minipage}[b]{0.475\textwidth}
            \centering
    \includegraphics[width=0.95\textwidth]{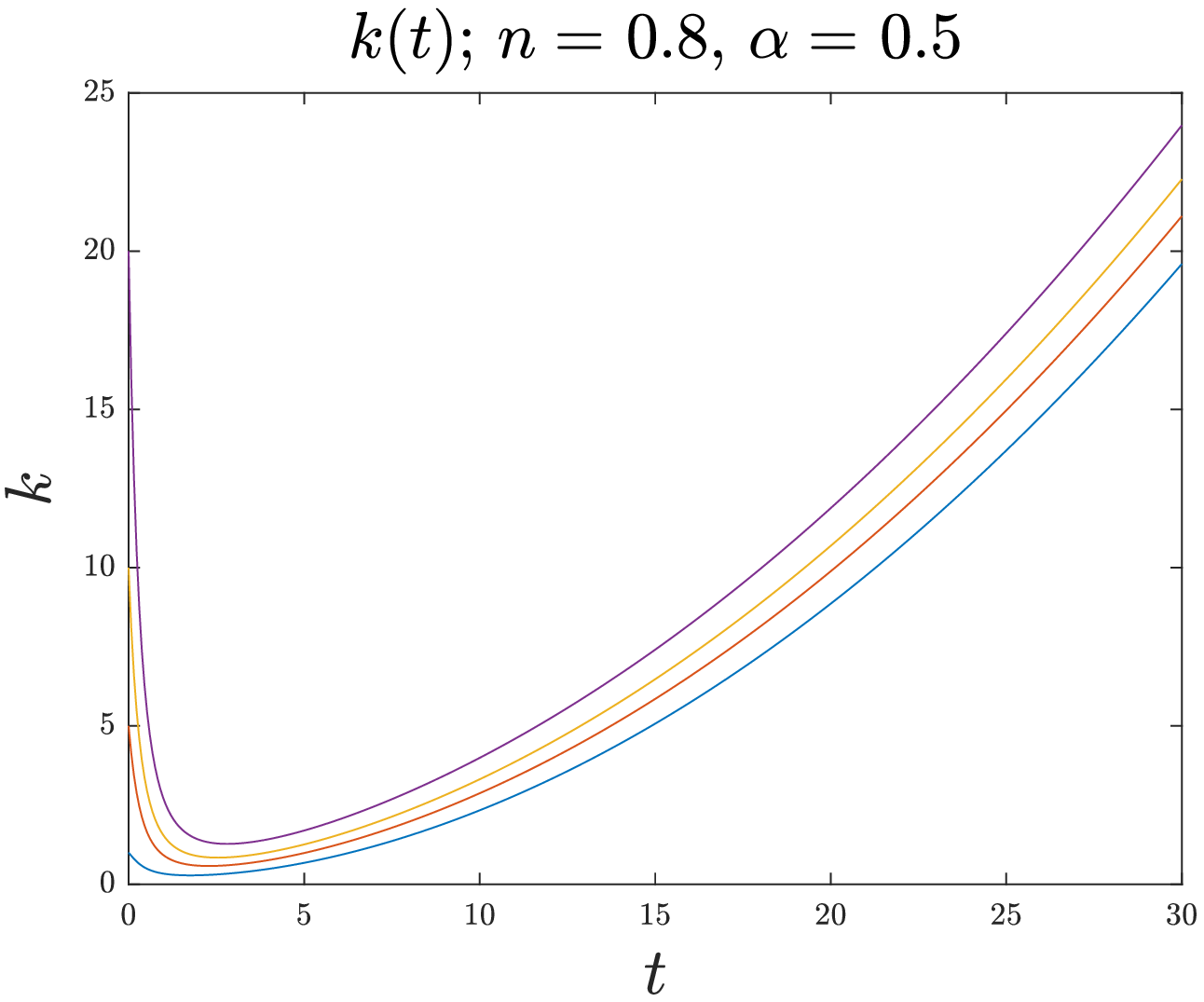}
  \caption*{(a) For $n<1$, we observe a rapid and short decrease, dependent on initial conditions, followed by a exponential growth starting for all initial conditions $k_0=1,5,10,20$.}
    \end{minipage}\hfill
    \begin{minipage}[b]{0.475\textwidth}
            \centering
    \includegraphics[width=0.95\textwidth]{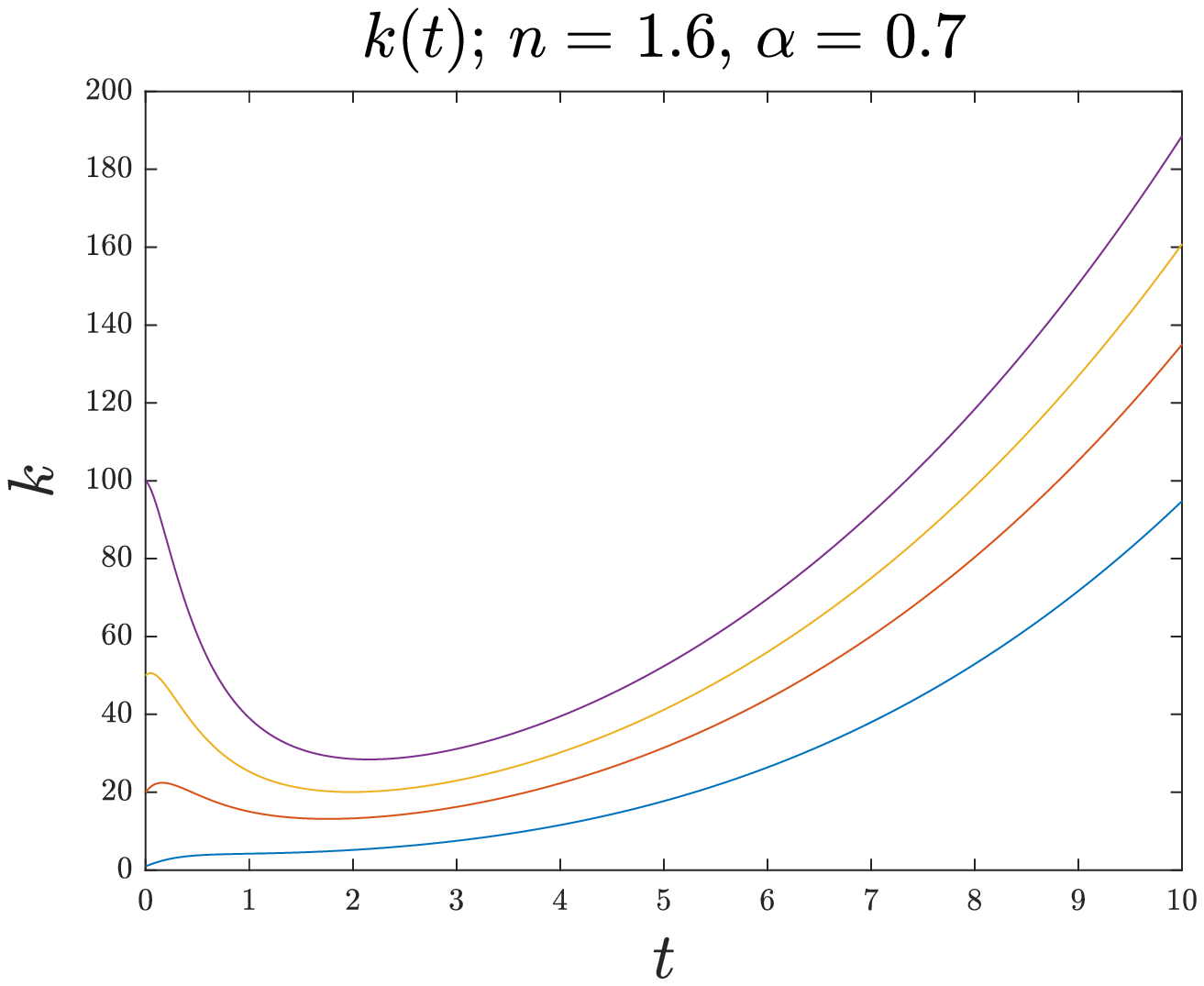}
  \caption*{(b)  For $n>1$, we observe a dependence on initial condition for the first part of the dynamics, followed by exponential growth, with initial conditions $k_0=1,20,50,100$.}
    \end{minipage}
    \caption{Numerical simulations of \eqref{eqn:thmCD2} for (a) $n<1$  (b) $n>1$.  The value of $\alpha$ is displayed in the titles of each figure. The other values of the parameters are $L_0=1$, $L_\infty=5$, $s=0.4$, $r=0.9$.}
    \label{Figure_4}
\end{figure}

\section{Conclusions}
The analysis of the Solow-Swan type models presents several stimulating mathematical challenges, which might be explored. In this work, we dwell on the case of non-constant returns to scale, providing an exact solution for the model that arise for the Cobb-Douglas production function. A more complicated case, namely the von Bertalanffy model, is also studied with similar results. Numerical simulations support the economical idea under the behaviour of the capital-labor ratio. Many issues remain open. One of this is, for instance, the study of the model for the CES production function (which does not satisfy the Inada conditions) with non-constant returns to scale or trying other, more exotic, production functions. We plan to explore these possibilities in the near future. 
\bigskip

\small
\noindent
\textbf{Acknowledgements:} NC and MS would like to thank the University of Pavia and the University of Trento, respectively, for supporting their research.
\bigskip%

\noindent
This research did not receive any specific grant from funding agencies in the public, commercial, or
not-for-profit sectors.

\bibliographystyle{plain} \footnotesize
\bibliography{references}
\end{document}